\documentclass[sigconf,nonacm]{acmart}

\usepackage{amsmath}

\usepackage{amssymb}
\usepackage{algorithm}
\usepackage{algpseudocode}
\usepackage{listings}

\renewcommand\footnotetextcopyrightpermission[1]{}
\title{Need for Speed Sort: A Recursive Distribution-Based Sorting Algorithm}

\author{Fran Sučić}
\affiliation{
  \institution{Independent Researcher}
  \country{Croatia}}
\email{fransucic46@gmail.com}

\author{Leo Vitasović}
\affiliation{
  \institution{IT University of Copenhagen}
  \country{Denmark}}
\email{leov@itu.dk}

\author{Nikola Petrušić}
\affiliation{
  \institution{Independent Researcher}
  \country{Croatia}}
\email{nikola.petrusic221@gmail.com}

\begin{document}

\begin{abstract}
We present \emph{Need for Speed Sort} (NFS Sort), a recursive distribution-based sorting algorithm designed for numeric arrays. The algorithm partitions elements into equal-width value intervals, recursively refines dense buckets, and propagates analytical interval bounds between recursive calls, avoiding repeated scans for local minima and maxima. NFS Sort combines a fragment-based, cache-conscious scatter procedure for large subarrays with a lower-overhead auxiliary-array approach for smaller inputs. Small buckets are deferred to a final insertion-sort cleanup, while a comparison-based fallback is activated when recursive partitioning repeatedly fails to reduce the problem size. This mechanism guarantees a worst-case running time of $\mathcal{O}(n \log n)$ and auxiliary space usage of $\mathcal{O}(\log n)$. Experimental evaluation on synthetic inputs and real-world datasets from the SOSD benchmark suite compares NFS Sort with Balanced Learned Sort, IPS4o, Boost Spreadsort, PDQSort, and \texttt{std::sort}. The results show that NFS Sort is competitive or better than established state-of-the-art sorting methods across dataset sizes and distributions, outperforming the learned baseline particularly on smaller inputs while retaining strong performance at larger scales. Overall, NFS Sort combines efficient recursive distribution, practical memory management, and robust worst-case guarantees for high-performance numeric sorting.
\end{abstract}

\maketitle


\section{Introduction}

Sorting is a fundamental problem in computer science and a core component of
many software systems. Comparison-based algorithms such as quicksort,
mergesort, heapsort, and introsort are robust and widely used, but they are
limited by the $\Omega(n \log n)$ lower bound in the comparison model~\cite{cormen2009introduction}. 
For numeric values, distribution-based methods can avoid this bound by
exploiting the structure of the value range~\cite{cormen2009introduction,knuth1998sorting}.

We introduce \emph{Need for Speed Sort} (NFS Sort), a recursive distribution-based
sorting algorithm for numeric arrays. At each recursive step, NFS Sort divides
the current value range into equally spaced buckets, scatters elements into
those buckets, and recursively processes only those that remain larger than the set threshold. Small
buckets are left for a final insertion-sort pass, which is efficient because
all remaining disorder is confined to bounded-size regions.

NFS Sort combines ideas from existing distribution sorters. Inspired by Flashsort~\cite{neubert1998flashsort},
it maps elements to buckets according to their relative position in the observed
value range. Unlike a single-pass approach, it recursively refines dense
regions. Its implementation also uses memory-management techniques inspired by
IPS$^4$o~\cite{axtmann2017ips4o}.

To avoid repeated memory accesses, NFS Sort propagates
analytical bucket bounds instead of recomputing minima and maxima during
recursion. It also includes a fallback mechanism: if recursion repeatedly fails
to reduce the size of a bucket, that bucket is sorted using a comparison-based
sort. This preserves a worst-case $\mathcal{O}(n \log n)$ bound while retaining
the practical advantages of distribution sorting.

This paper describes the algorithm, analyses its time and space complexity,
discusses adversarial inputs, and evaluates its empirical performance against
established sorting methods.

To facilitate reproducibility, the source code for NFS Sort and the
accompanying benchmark implementation are publicly available at
\url{https://github.com/fsucic/need-for-speed-sort}.

The remainder of the paper is organized as follows. Section~2 discusses related
work. Section~3 describes the algorithm and its implementation details.
Section~4 analyzes the theoretical behavior of NFS Sort, including worst-case
complexity, expected recursion depth under distributional assumptions, and
adversarial inputs. Section~5 presents experimental results, and Section~6 concludes.

\section{Related Work}

Sorting algorithms are commonly divided into comparison-based and
distribution-based methods. NFS Sort belongs to the distribution-based family.
Distribution-based sorting algorithms exploit properties of the keys rather
than relying only on comparisons. Bucket sort, radix sort, and counting sort can
achieve linear time under suitable assumptions, but their performance depends on
the key domain, input distribution, and memory requirements~\cite{cormen2009introduction,knuth1998sorting}. Flashsort is particularly relevant to this work because it maps elements to buckets based on
their relative position between the minimum and maximum input values, followed
by local clean-up~\cite{neubert1998flashsort}.

More recent work has focused on engineering distribution sorting for modern
hardware. IPS$^4$o~\cite{axtmann2017ips4o}, for example, uses efficient classification, 
buffering, and block permutation techniques to improve cache behaviour and reduce memory
overhead. NFS Sort is inspired by these ideas, but applies a sequential
recursive distribution strategy: it repeatedly refines dense value intervals,
propagates analytical bounds between recursive calls, and uses a fallback sorter
to avoid pathological recursion.

\section{Algorithm}

This section describes the design of NFS Sort. We first present a naive
version of the algorithm to illustrate its core principles. We then
introduce several optimisations that are incorporated in the final
implementation and noticeably improve performance.

NFS Sort does not introduce a fundamentally new sorting paradigm, but
rather combines existing ideas into an efficient overall design.
Conceptually, it can be viewed as a recursive variant of Flashsort~\cite{neubert1998flashsort}
augmented with modern memory-management techniques inspired by
IPS$^4$o~\cite{axtmann2017ips4o}.

\subsection{Naive NFS Sort}

Appendix~\ref{app:naive-nfs} gives code-style pseudocode for the naive version of NFS Sort.

We assume an array of numeric elements stored contiguously in memory.
The algorithm begins by computing the minimum and maximum values of the
input array, which requires $\mathcal{O}(n)$ time. These values, together
with the array boundaries, are then passed to the recursive core
procedure.

At each recursive step, the algorithm chooses a number of buckets based on
the size of the current subarray. Our implementation uses a naive
piecewise heuristic and never uses more than 6500 buckets.
The algorithm then performs a \emph{scatter step}, which
distributes the elements of the subarray into buckets.

First, the global value range $[\textit{min}, \textit{max}]$ is divided into
equally sized buckets, determined by
\[
    \textit{interval\_size}
    =
    \frac{\textit{max} - \textit{min}}{\textit{bucket\_count} - 1}.
\]

\noindent
Each element $x$ is then assigned to a bucket according to its relative
position in the range $[\textit{min}, \textit{max}]$. Specifically, the bucket
index is computed as
\[
    b(x)
    =
    \left\lfloor
    \frac{(x - \textit{min})(\textit{bucket\_count} - 1)}
    {\textit{max} - \textit{min}}
    \right\rfloor .
\]

\noindent
Equivalently, if
\[
    \textit{inverse\_interval}
    =
    \frac{1}{\textit{max} - \textit{min}},
\]
then the classification rule can be written as
\[
    b(x)
    =
    \left\lfloor
    (x - \textit{min})(\textit{bucket\_count} - 1)
    \cdot \textit{inverse\_interval}
    \right\rfloor .
\]

\noindent
This mapping sends $\textit{min}$ to the first bucket and $\textit{max}$ to
the last bucket.\footnote{With this particular choice of scaling by
$\textit{bucket\_count}-1$, the last bucket contains only values equal to
$\textit{max}$, assuming exact arithmetic. This convention is intentional and
simplifies the handling of the maximum element.}
Values between $\textit{min}$ and $\textit{max}$ are assigned
proportionally, so that the bucket boundaries are evenly spaced across the
full value range.

Elements are rearranged such that all elements assigned to the same
bucket occupy a contiguous region of the array, while the buckets
themselves appear in sorted order with respect to their value ranges.
This distribution step, which we refer to as the scatter step, runs in linear
time in the size of the current subarray.

Intuitively, this operation partitions the input according to value
ranges, thereby decomposing the problem into smaller independent
subproblems that can be processed recursively.

During the scatter step, the algorithm additionally records the number
of elements assigned to each interval. The subsequent processing of a
bucket depends on its size. Buckets containing at most a small threshold
number of elements are left unchanged and handled later by a final
cleanup phase. Larger buckets are processed recursively.

Importantly, recursion does not recompute the minimum and maximum values
of a bucket. Instead, analytical bounds are propagated. Since the
original value range is divided into equally sized intervals, the
corresponding interval boundaries directly provide valid lower and upper
bounds for each recursive call. These analytically derived bounds avoid
additional linear scans while preserving correctness.

The propagated bounds are interval boundaries, not necessarily the actual
minimum and maximum values contained in the bucket. A bucket may contain elements
that occupy only part of its assigned interval, but all of its elements lie
within the bounds passed to the recursive call.

Extremely skewed distributions must nevertheless be considered. If a large fraction
of the elements is assigned to a single bucket (for instance, in distributions containing a large amount of extremely small values), the recursive reduction
may fail to sufficiently decrease the problem size. To address this, the algorithm employs a fallback mechanism. If a recursive
branch repeatedly produces a bucket containing more than $\alpha$
fraction of its parent subarray, the algorithm allows for only a small fixed
number of such consecutive failures before falling back to a
comparison-based sort on that bucket.

Assuming the fallback sorter has worst-case $\mathcal{O}(m \log m)$ time
on a bucket of size $m$, this mechanism prevents indefinitely unbalanced
recursion and yields an overall worst-case bound of
$\mathcal{O}(n \log n)$, as shown in Section~\ref{sec:analysis}.

After the recursive phase terminates, the algorithm performs insertion
sort on the entire array. At this point, all remaining unsorted regions
are buckets whose sizes are bounded by a fixed threshold, and the buckets
themselves are already ordered by value range. Therefore, all remaining
inversions are confined within constant-size regions. For a fixed
threshold, the total number of remaining inversions is $\mathcal{O}(n)$,
so the final insertion-sort pass runs in $\mathcal{O}(n)$ time.

The naive version described here deliberately omits several engineering
details and edge-case refinements in order to emphasize the main recursive
bucketing structure. These refinements, including degenerate ranges,
helper arrays, and the scatter step, are introduced in the
following subsections.

\subsection{Implementation Details and Refinements}

This section describes several implementation details and refinements that
improve the practical performance of the algorithm.

\subsubsection{Data Types}

NFS Sort is implemented for arithmetic numeric types, including standard
integral and floating-point types. The current implementation assumes finite
inputs and does not define behaviour for special floating-point values such as
\texttt{NaN}, \texttt{+inf}, or \texttt{-inf}.

\subsubsection{Fast Paths}

NFS Sort includes fast paths for already sorted and reverse-sorted inputs. When
the top-level sorting procedure is called, the implementation first performs
quick-fail checks to determine whether the array is already in increasing or
decreasing order. If the array is already sorted, the algorithm returns it
immediately as-is. If the array is reverse sorted, the array is reversed and then
returned.

If neither fast path applies, then the input is neither sorted nor reverse
sorted. For finite numeric inputs, this also implies that the array does not
consist entirely of equal elements. Hence, the minimum and maximum values are
distinct, and the algorithm can proceed.

\subsubsection{Approximating the Minimum and Maximum}

One optimization used by NFS Sort is to approximate the minimum and maximum
values before entering the recursive procedure. Instead of performing a full
linear scan over the input solely to determine these values, the implementation
estimates them by sampling elements from the array.

If the approximate minimum and maximum are equal, the sampled bounds are
discarded and the true minimum and maximum are computed by a full scan. This
prevents the initial call from receiving a degenerate value range.

The approximated bounds are then passed to the first recursive call. Since these
bounds may not contain all input values, the initial scatter step uses two
additional buckets: one for elements below the approximated minimum and one for
elements above the approximated maximum. During this step, the
algorithm also computes the true minimum and maximum. Whenever an element is
placed into the underflow bucket, it is compared against the current minimum;
similarly, elements placed into the overflow bucket are compared against the
current maximum.

Thus, after the first scatter step, the true minimum and maximum are known. These
exact bounds are then propagated to lower levels of recursion, so the overflow
and underflow buckets are needed only in the initial distribution step.

\subsubsection{Handling Homogeneous Subarrays}

At the beginning of each recursive call after the initial one, the algorithm
performs a quick-fail check whether the current subarray is homogeneous, meaning 
that all of its elements are equal. If so, the subarray is already sorted and the 
recursive call returns immediately.

This case occurs naturally on inputs with few distinct values. After several
distribution steps, many buckets may contain repeated copies of a single value.
The homogeneity check detects such buckets early and avoids unnecessary bucket
construction, scattering, and further recursion.

\subsubsection{Scatter Step}

For large subarrays, NFS Sort uses a fragment-based scatter procedure inspired
by IPS$^4$o~\cite{axtmann2017ips4o}. Instead of immediately writing each
element to its final bucket region, the algorithm first classifies elements into
buckets and stores them in small fixed-size buffers, called fragments. Each
bucket has its own fragment buffer. When a fragment becomes full, it is flushed
back into the input array as a contiguous block.

During this pass, the algorithm also counts the number of elements assigned to
each bucket. These counts are then prefix-summed to compute the final bucket
boundaries. Since full fragments were written sequentially during the first
pass, they are not necessarily located inside their final bucket regions. A
second phase therefore defragments the array by moving or swapping whole
fragments into their correct bucket intervals. Any elements that do not fill a
complete fragment are kept temporarily in the helper buffer and copied into the
remaining gaps at the end.

This approach is similar in spirit to IPS$^4$o~\cite{axtmann2017ips4o}: both algorithms classify
elements into buckets, buffer writes in small blocks, and move data in
cache-friendly fragments rather than as individual elements. This is done to
reduce scattered memory writes and improve locality during distribution.
However, the version used in NFS Sort is simpler and sequential. It uses the
fragment scatter only as the distribution step inside the recursive algorithm,
rather than as part of a fully in-place parallel sorting framework.

For smaller subarrays, NFS Sort uses a simpler scatter procedure. The algorithm
first classifies each element into a bucket and counts the resulting bucket
sizes. These counts are then converted into prefix sums, which determine the
contiguous output region assigned to each bucket.

After the bucket boundaries have been computed, the algorithm performs a second
linear pass over the subarray. Each element is classified again and written to
its corresponding position in a helper array using the bucket offsets as write
pointers. Finally, the helper array is copied back into the original subarray.

This naive scatter step performs more direct element-wise writes than the
fragment-based version, but it has lower overhead and is therefore preferable on
small inputs. The implementation also records whether enough sufficiently large
buckets remain to justify further recursive processing; otherwise, the
subarray is left for the final cleanup phase. \footnote{All buckets left for the final cleanup phase are always bounded in
size by the fixed threshold \(S\).}

\subsubsection{Auxiliary Storage}
The auxiliary storage described in this subsection consists of the fixed
work arrays allocated by the top-level procedure. The additional temporary
bucket-offset vectors created during recursion are discussed separately in
Section~\ref{sec:space-complexity}.

NFS Sort allocates a fixed set of auxiliary arrays once in the top-level
procedure and reuses them throughout all recursive calls. This avoids repeated
allocation inside the recursion and keeps the additional memory usage
predictable.

The implementation maintains an array of bucket sizes, used by both scatter
procedures to count how many elements are assigned to each bucket. For the
fragment-based scatter, it also maintains an array of fragment sizes, which
records how many elements are currently stored in each per-bucket fragment
buffer. In addition, the fragment-based scatter uses a helper buffer containing
one fixed-size fragment for each bucket, two integer arrays for bucket end
offsets and current bucket write offsets, and a temporary fragment buffer used
when swapping fragments.

For small subarrays, the naive scatter reuses the helper buffer as a temporary
output array: elements are written into the helper array according to their
bucket offsets and then copied back to the original subarray. In both scatter
variants, the auxiliary arrays are temporary workspaces only; the final sorted
output remains in the original input array.

With fragment size set to 90, the implementation allocates a fragment helper
array of size $2002 \cdot 90$ elements and a temporary fragment of size $90$.
Thus, the auxiliary storage for elements of type $T$ is
\[
    (2002 \cdot 90 + 90)\cdot \texttt{sizeof}(T)
    =
    180270 \cdot \texttt{sizeof}(T).
\]
\noindent
The implementation also allocates four integer work arrays: one array of 6500
bucket sizes and three arrays of 2002 offsets. This contributes
\[
    (6500 + 2002 + 2002+ 2002)\cdot \texttt{sizeof}(\texttt{int})
    =
    12506 \cdot \texttt{sizeof}(\texttt{int})
\]
\noindent
additional bytes. Therefore, excluding recursion-stack overhead and small
temporary vectors of bucket offsets, the total auxiliary memory is
\[
    180270 \cdot \texttt{sizeof}(T)
    +
    12506 \cdot \texttt{sizeof}(\texttt{int}).
\]

For example, when sorting \texttt{double} values and assuming 4-byte integers,
this is approximately $1.42$ MiB of auxiliary memory.

Note that all of the above values were chosen using simple heuristics and may be subject
to further optimization.

The memory usage of the small temporary vectors of bucket offsets is analyzed 
in Section~\ref{sec:space-complexity}.

\subsubsection{Floating-Point Arithmetic and Index Clamping}

For floating-point inputs, bucket indices are computed using finite-precision
arithmetic. Small rounding errors may occur near bucket boundaries, especially
when values lie very close to the boundary between two adjacent buckets.

The important property for correctness is that bucket regions remain ordered by
value. In exact arithmetic, this follows directly from the classification
formula: within a recursive call, every element is transformed using the same
affine map with positive scale. Thus, if \(x < y\), then the classification
value of \(x\) is no larger than the classification value of \(y\).

For finite floating-point inputs and non-degenerate intervals, the corresponding
computation is monotone in normal use. Rounding may affect which side of a
bucket boundary an element falls on, but it does not normally make bucket
regions appear in the wrong order. In other words, floating-point error may
slightly perturb boundary cases, but it should not cause smaller values to be
placed after larger values at the bucket level.

To guard against numerical errors that could produce an index just outside the
valid range, the implementation clamps each computed bucket index before using
it to access bucket-related arrays. In exact arithmetic, every value in the
interval \([\textit{min},\textit{max}]\) maps to an index between \(0\) and
\(\textit{bucket\_count}-1\). The clamping step is therefore only an
implementation safeguard against finite-precision arithmetic and does not
change the intended bucket assignment under exact arithmetic.

\section{Theoretical Analysis}
\label{sec:analysis}

This section analyzes the theoretical behavior of NFS Sort. We first show that
the fallback mechanism guarantees a worst-case running time of
$\mathcal{O}(n \log n)$. We then analyze the auxiliary space usage and show that,
because the implementation uses fixed-size helper arrays together with recursive
bucket-offset vectors, the total auxiliary space is $\mathcal{O}(\log n)$.

We also discuss the expected recursion depth under reasonably balanced input
distributions. In such cases, the recursive bucket refinement rapidly reduces
subproblem sizes, causing the algorithm to behave close to linearly in practice.
Finally, we examine adversarial inputs for the algorithm.

\subsection{Worst-Case Time Complexity}

We now prove that NFS Sort has worst-case running time
$\mathcal{O}(n \log n)$.

The proof relies on the following implementation conditions, all of which are
satisfied by the algorithm described above.

First, every scatter step on a subarray of size $m$ runs in
$\mathcal{O}(m)$ time. This holds because the number of buckets used by the
implementation is bounded by a fixed constant in the large-array cases and is
at most $m$ in the small-array case.

Second, the scatter step partitions the current subarray into contiguous
bucket regions ordered by their value intervals. Thus, if bucket $i$ precedes
bucket $j$, with $i<j$, then every element assigned to bucket $i$ is less than
or equal to every element assigned to bucket $j$. Consequently, after scattering, 
no inversion crosses bucket boundaries; any remaining disorder is internal to 
individual buckets.

Third, recursive calls are made only on buckets larger than a fixed cleanup
bound $S$. Buckets of size at most $S$ may be left for the final insertion-sort
cleanup. The constant $S$ is independent of the input size $n$.

Finally, the fallback sorter has worst-case running time
$\mathcal{O}(m \log m)$ on a subarray of size $m$.

Inputs containing \texttt{NaN} or
infinities are outside the scope of the theorem.

\begin{lemma}
NFS Sort has worst-case running time $\mathcal{O}(n \log n)$.
\end{lemma}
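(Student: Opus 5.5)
We split the total running time into four parts and bound each by $\mathcal{O}(n \log n)$: the top-level work, the scatter steps over the whole recursion tree, the fallback sorts, and the final insertion-sort cleanup. The top-level work consists of the sortedness checks, sampling, the optional full min/max scan, and allocation of the fixed work arrays. This is clearly $\mathcal{O}(n)$. By the second implementation condition, buckets are ordered by value and no inversion crosses a bucket boundary. Every region left unsorted after recursion is either a bucket of size at most $S$ (third condition and the footnote on cleanup buckets) or has been fully sorted, either by a fallback sort or by the homogeneity check. Hence each element participates in at most $S-1$ inversions, the total inversion count is at most $(S-1)n$, and insertion sort costs $\mathcal{O}(n + Sn) = \mathcal{O}(n)$.

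For the recursion, view the calls as a tree in which each node is a subarray. The subarrays at any fixed depth are disjoint, so by the first condition the scatter cost summed over one level is $\mathcal{O}(n)$, and it suffices to show the depth is $\mathcal{O}(\log n)$. Call an edge from a parent of size $m$ to a child of size $m'$ \emph{good} if $m' \le \alpha m$ and \emph{bad} otherwise. The fallback rule allows at most a fixed number $k$ of consecutive bad edges on any root-to-leaf path before that bucket is handed to the fallback sorter. Therefore, among any $k+1$ consecutive edges on a path, at least one is good, and each good edge shrinks the size by a factor $\alpha<1$. Every call has size greater than $S \ge 1$, so a path contains at most $\log_{1/\alpha} n$ good edges. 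This bounds the depth by $(k+1)\log_{1/\alpha} n + \mathcal{O}(1) = \mathcal{O}(\log n)$, and the total scatter cost is $\mathcal{O}(n\log n)$.

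The fallback sorts are applied to buckets that are disjoint in the array, since a bucket given to the fallback is not recursed further. With sizes $m_i$ and $\sum m_i \le n$, their cost is $\sum \mathcal{O}(m_i \log m_i) \le \mathcal{O}(\log n \sum m_i) = \mathcal{O}(n\log n)$. The remaining per-call overheads are bounded as follows. The homogeneity check costs $\mathcal{O}(m)$ and is absorbed into the level sums. Bucket-array initialization is proportional to the bucket count, which is at most $m$ or a constant, and is absorbed the same way. Summing the four parts gives $\mathcal{O}(n\log n)$.

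The main obstacle is the depth argument. It relies on the fallback counter genuinely tracking \emph{consecutive} failures along a path, so that a good step resets it and $\alpha$ is a fixed constant below $1$. I would state this precisely as the invariant that each call carries a failure count, incremented on a bad edge, reset on a good edge, and triggering the fallback when it exceeds $k$. A secondary subtlety is a bucket that is bad yet not detected because rounding or clamping misassigns elements. Clamping keeps indices in range, so each scatter is still a partition and the disjointness of subarrays at each level is unaffected.
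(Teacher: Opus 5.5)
Your proposal is correct and follows essentially the same route as the paper. Both arguments use:
\begin{itemize}
\item linear non-recursive cost per call, summed over the disjoint subarrays at each depth;
\item an $\mathcal{O}(\log n)$ depth bound, from the rule that at most a fixed number of consecutive pathological ($>\alpha m$) steps can precede either a fallback or a constant-factor shrink;
\item disjointness of the fallback buckets, giving $\sum q_j\log q_j\le n\log n$;
\item an $\mathcal{O}(n)$ inversion bound for the final insertion sort.
\end{itemize}
The differences are cosmetic: you count good edges directly instead of grouping calls into phases, and you bound inversions per element instead of using $\binom{|B|}{2}\le S|B|/2$ per region.
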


\begin{proof}
Consider a recursive call on a subarray of size $m$. Let $k$ be the number of
buckets used by the call. All non-recursive work performed by the call,
including homogeneity checks, bucket initialization, classification, prefix-sum
computation, scattering, and bookkeeping, consists of a constant number of
linear passes over the subarray together with $\mathcal{O}(k)$ bucket-related
work. Therefore the cost of one recursive call, excluding recursive calls made
on its children, is
\[
    \mathcal{O}(m+k).
\]
Since the implementation ensures $k=\mathcal{O}(m)$, this cost is
\[
    \mathcal{O}(m+k)=\mathcal{O}(m).
\]
After scattering, the subarray is partitioned into buckets of sizes
\[
    m_1,m_2,\ldots,m_k,
\]
\noindent
with

\[
    \sum_{i=1}^k m_i = m.
\]
Buckets of size at most $S$ are not processed recursively and are left for the
final cleanup phase. Buckets larger than $S$ are either processed recursively or
sorted by the fallback sorter.

Let $\alpha$ be the fixed constant used to detect insufficient progress, where
$0<\alpha<1$. A recursive child of size $m_i$ is called pathological if
\[
    m_i > \alpha m.
\]
\noindent
Otherwise it is non-pathological, and its size satisfies
\[
    m_i \leq \alpha m.
\]
\noindent
Therefore, every non-pathological recursive step reduces the subproblem size by
at least a fixed constant factor.

The algorithm permits at most a fixed number $c$ of consecutive pathological
recursive steps along any recursion branch. If more than $c$ consecutive
pathological steps would occur, the current bucket is sorted using the fallback
sorter instead of being processed recursively.

It follows that along any root-to-leaf recursion path, recursive calls can be
grouped into phases. Each phase consists of at most $c$ consecutive
pathological calls, followed either by fallback termination or by one
non-pathological call. Hence each phase has constant length, since $c$ is a
fixed implementation constant.

Every phase that does not terminate by fallback contains a non-pathological
call. At that point, the size of the continuing recursive subproblem is at most
an $\alpha$ fraction of the size at the beginning of the phase. Thus, after
each non-terminating phase, the subproblem size decreases by a fixed constant
factor. Such a decrease can occur only $\mathcal{O}(\log n)$ times before the
subproblem size falls below the fixed cleanup bound $S$. Since each phase has
bounded length, the total number of recursive scatter levels on any
root-to-leaf path is $\mathcal{O}(\log n)$.

Now consider all scatter work at a fixed recursion depth. At that depth, the
active subarrays are disjoint, because every scatter step partitions its parent
subarray into disjoint bucket regions. Hence the sum of the sizes of all active
subarrays at any fixed depth is at most $n$. Since the non-recursive cost of a 
call is linear in the size of its subarray, the total non-recursive work at any 
one depth is $\mathcal{O}(n)$.

There are only $\mathcal{O}(\log n)$ recursive depths, so the total
non-recursive work over all recursive calls is
\[
    \mathcal{O}(n \log n).
\]

It remains to bound the total cost of fallback sorting. Suppose the fallback
sorter is applied to subarrays of sizes
\[
    q_1,q_2,\ldots,q_r.
\]
\noindent
These fallback subarrays are pairwise disjoint: each one is a bucket in the
recursive partition tree, and once a bucket is fallback-sorted, recursion does
not continue inside it. Therefore,
\[
    \sum_{j=1}^r q_j \leq n.
\]
By assumption, fallback sorting a subarray of size $q_j$ costs
$\mathcal{O}(q_j \log q_j)$. Hence the total fallback cost is
\[
    \sum_{j=1}^r \mathcal{O}(q_j \log q_j).
\]
\noindent
Since $q_j \leq n$ for every $j$,
\[
    \sum_{j=1}^r q_j \log q_j
    \leq
    \sum_{j=1}^r q_j \log n
    =
    \left(\sum_{j=1}^r q_j\right)\log n
    \leq
    n\log n.
\]
\noindent
Thus the total fallback cost is
\[
    \mathcal{O}(n \log n).
\]

Finally, after recursive distribution and fallback sorting terminate, NFS Sort
performs a global insertion-sort cleanup pass. By construction, every region
that remains potentially unordered has size at most the fixed constant $S$.
Furthermore, scatter steps place bucket regions in sorted value-interval order,
so no inversion crosses the boundary between two final regions. Therefore all
remaining inversions are internal to these final regions. For any final unordered 
region $B$,
\[
    \binom{|B|}{2}
    \leq
    \frac{S|B|}{2},
\]
\noindent
because $|B|\leq S$. Since the final regions are disjoint, the total number of
remaining inversions $I$ satisfies
\[
    I
    \leq
    \sum_B \binom{|B|}{2}
    \leq
    \sum_B \frac{S|B|}{2}
    =
    \frac{S}{2}\sum_B |B|
    \leq
    \frac{Sn}{2}
    =
    \mathcal{O}(n).
\]
\noindent
Insertion sort runs in $\mathcal{O}(n+I)$ time on an array with $I$ inversions~\cite{cormen2009introduction,knuth1998sorting}.
Since $I=\mathcal{O}(n)$, the final cleanup phase runs in $\mathcal{O}(n)$ time.

Combining the non-recursive work over all recursive calls, fallback work, and
cleanup work gives
\[
    \mathcal{O}(n \log n)
    +
    \mathcal{O}(n \log n)
    +
    \mathcal{O}(n)
    =
    \mathcal{O}(n \log n).
\]

Therefore, NFS Sort has worst-case running time $\mathcal{O}(n \log n)$.
\end{proof}

\subsection{Space Complexity}
\label{sec:space-complexity}

Most of the auxiliary storage used by NFS Sort is allocated once in the
top-level procedure and reused throughout the recursion. This includes the
helper array, fragment buffers, temporary fragment storage, bucket-size arrays,
fragment-size arrays, and offset arrays used by the scatter procedures. Since
the maximum number of buckets and the fragment size are fixed implementation
constants, this part of the auxiliary memory usage is constant with respect to
the input size \(n\).

The only additional auxiliary storage that grows with recursion comes from the
temporary bucket-offset vectors created inside recursive calls. Each call to
the recursive core procedure stores a vector of bucket offsets returned by the
scatter step. If the call uses \(k\) buckets, then this vector contains
\(k+1\) offsets in the usual case. During the first scatter step, two additional
overflow buckets may be used to handle elements below and above the approximate
initial bounds, so the first call may store \(k+3\) offsets. Since the
implementation uses at most 6500 buckets, the number of offsets stored by any
single recursive call is bounded by a fixed constant.

Although many recursive calls may occur over the whole execution, their
bucket-offset vectors are not all live at the same time. At any moment, the
only live vectors are those belonging to recursive calls on the current
call stack. Once a recursive call returns, its local bucket-offset vector is
destroyed before control continues to other branches of the recursion tree.

Therefore, the memory used by bucket-offset vectors is proportional to the
maximum recursion depth. As shown in the worst-case time analysis, the fallback
mechanism prevents arbitrarily long chains of insufficiently shrinking
recursive calls. Along any root-to-leaf path, after at most a fixed number of
pathological steps, either the current bucket is sorted by the fallback sorter
or the subproblem size decreases by a fixed constant factor. Hence the maximum
recursion depth is \(\mathcal{O}(\log n)\).

Since each recursive level stores only a constant-size bucket-offset vector,
the total live memory used by bucket-offset vectors is
\[
    \mathcal{O}(\log n).
\]
More concretely, if \(K_{\max}=6500\) is the maximum number of buckets, then
each non-initial recursive call stores at most \(K_{\max}+1\) integer offsets,
and the initial call stores at most \(K_{\max}+3\) integer offsets. Thus, if
the maximum recursion depth is \(D\), the live storage used by bucket-offset
vectors is bounded by
\[
    (K_{\max}+3)D \cdot \texttt{sizeof}(\texttt{int}),
\]

up to lower-order constant factors. Since \(K_{\max}\) is fixed and
\(D=\mathcal{O}(\log n)\), this is asymptotically \(\mathcal{O}(\log n)\).

Combining this with the fixed auxiliary buffers described earlier, the total
auxiliary space usage of the implementation is

\[
    \mathcal{O}(1) + \mathcal{O}(\log n)
    =
    \mathcal{O}(\log n),
\]

where the \(\mathcal{O}(1)\) term hides the fixed-size helper arrays and
fragment buffers. In practical terms, the dominant auxiliary memory cost is the
fixed helper storage, while the recursive bucket-offset vectors contribute only
a small additional amount proportional to the recursion depth.

\subsection{Expected Recursion Depth}
\label{sec:expected-depth}

In this section, we provide intuition for why NFS Sort can perform well on
typical inputs. We first prove a simple theorem that relates the expected
recursion depth to the density of the input distribution. The theorem is not
intended to characterize every implementation detail of the algorithm; rather,
it serves as a guide for understanding why recursive bucket refinement can
quickly reduce subproblem sizes. We then apply the result to several concrete
input distributions.

We will give a simple bound on the expected recursion depth under a bounded
input density model. Suppose the input elements
\[
    X_1, X_2, \ldots, X_n
\]
are drawn independently from a probability density function \(f\) supported on
an interval \([a,b]\). Assume that \(f\) is bounded, and define
\[
    M := (b-a)\|f\|_{\infty}.
\]
The quantity \(M\) measures how far the density may deviate from the uniform
density on \([a,b]\). For example, if \(f\) is uniform on \([a,b]\), then
\(M=1\).

Assume that each recursive distribution step divides the current value interval
into \(B\) equal-width buckets, and that recursion stops on a branch once the
corresponding bucket contains at most \(T\) elements.

\begin{theorem}
Let
\[
    M := (b-a)\|f\|_{\infty}.
\]
For the branch containing \(X_1\), let \(D\) be the first depth \(d\) such that
the depth-\(d\) bucket containing \(X_1\) contains at most \(T\) sample points.
Then
\[
    \mathbb{E}[D]
    \leq
    \max\left\{0,\left\lceil \log_B\left(\frac{nM}{T}\right)\right\rceil\right\}
    +
    \frac{B}{B-1}.
\]
In particular,
\[
    \mathbb{E}[D] = \mathcal{O}(\log n).
\]
\end{theorem}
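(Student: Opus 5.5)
We use the tail-sum formula $\mathbb{E}[D]=\sum_{d\ge 0}\Pr[D>d]$ and bound each tail probability by a union-type estimate on the bucket containing $X_1$. The key observation is that the depth-$d$ buckets are deterministic. The root interval is $[a,b]$, and after $d$ equal-width splits into $B$ parts every depth-$d$ bucket is an interval of width $(b-a)B^{-d}$, fixed in advance and independent of the sample. We treat all buckets as subdividing (idealizing that recursion is only stopped, not altered, by bucket sizes), so the event $\{D>d\}$ implies that the depth-$d$ bucket $I_d(X_1)$ containing $X_1$ has more than $T$ points.

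\textbf{Step 1 (condition on $X_1$).} Fix $X_1=x$. The interval $I_d(x)$ is determined, and the other $n-1$ points fall in it independently, each with probability
\[
p_d=\int_{I_d(x)} f \le \|f\|_\infty (b-a)B^{-d} = M B^{-d}.
\]
The count $N_d$ of points in $I_d(x)$ is therefore $1+\mathrm{Bin}(n-1,p_d)$, and
\[
\mathbb{E}[N_d-1]\le nMB^{-d}.
\]

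\textbf{Step 2 (Markov).} Since $D>d$ forces $N_d\ge T+1$, i.e.\ $N_d-1\ge T$, Markov's inequality gives
\[
\Pr[D>d]\le \min\{1,\, nMB^{-d}/T\}.
\]
This bound holds conditionally on $X_1=x$ for every $x$, so it holds unconditionally.

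\textbf{Step 3 (sum the tail).} Let $d_0=\max\{0,\lceil\log_B(nM/T)\rceil\}$. For $d<d_0$ we bound each term by $1$, contributing $d_0$. For $d\ge d_0$ we have $nMB^{-d_0}/T\le 1$, so $nMB^{-d}/T\le B^{-(d-d_0)}$. These terms form a geometric series
\[
\sum_{j\ge0}B^{-j}=\frac{B}{B-1}.
\]
Adding the two parts gives the stated bound. Since $B\ge2$ and $M$, $T$, $B$ are constants, $d_0=\mathcal{O}(\log n)$ and hence $\mathbb{E}[D]=\mathcal{O}(\log n)$.

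\textbf{Main obstacle.} The delicate point is the modelling step: ensuring that $I_d(X_1)$ is a fixed interval of width $(b-a)B^{-d}$, and not a data-dependent one. This is exactly why the analytical-bounds propagation matters, since the bucket boundaries never depend on sample minima or maxima. We also need that the Markov bound is applied conditionally on $X_1$ so that independence of the other points can be used. The boundary conventions (the last bucket holding only the maximum, closed versus half-open intervals) only change endpoints of measure zero. Because $f$ is a density, these endpoints do not affect $p_d$.
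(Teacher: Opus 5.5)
Your proof is correct and is essentially the paper's argument: condition on $X_1=x$, write $N_d-1\sim\mathrm{Bin}(n-1,p_d(x))$ with $p_d(x)\le MB^{-d}$, apply Markov to get $\Pr(D>d)\le\min\{1,nM/(TB^d)\}$, and split the tail sum at $d_0$ with a geometric remainder $B/(B-1)$. One small correction to your closing remark, which does not affect the theorem since it assumes $B$ equal-width buckets: the implementation's $(B-1)$-scaling convention is not just a measure-zero endpoint effect, because it gives the nondegenerate buckets width $(b-a)/(B-1)$, so applying the bound to the actual code amounts to replacing $B$ by $B-1$.
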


\begin{proof}
For each depth \(d\geq 0\), let \(I_d(X_1)\) be the unique depth-\(d\) bucket
that contains \(X_1\). Since each recursive step divides the current interval
into \(B\) equal-width buckets, the length of a depth-\(d\) bucket is
\[
    |I_d(X_1)| = \frac{b-a}{B^d}.
\]
Then define
\[
    N_d := \#\{1 \leq i \leq n : X_i \in I_d(X_1)\},
\]
the number of sample points lying in the same depth-\(d\) bucket as \(X_1\).
The stopping depth is therefore
\[
    D := \min\{d\geq 0 : N_d \leq T\}.
\]
Equivalently,
\[
    \{D>d\} = \{N_d > T\}.
\]
Using the tail-sum formula for nonnegative integer-valued random variables,
\[
    \mathbb{E}[D]
    =
    \sum_{d\geq 0}\Pr(D>d)
    =
    \sum_{d\geq 0}\Pr(N_d>T).
\]
We next bound \(\Pr(N_d>T)\). Fix \(d\), and condition on the value
\(X_1=x\). Then \(I_d(X_1)\) becomes the deterministic bucket \(I_d(x)\)
containing \(x\). Its probability mass is
\[
    p_d(x)
    :=
    \Pr(X_j \in I_d(x))
    =
    \int_{I_d(x)} f(u)\,du.
\]
Since \(f(u)\leq \|f\|_{\infty}\) for all \(u\), we have
\[
    p_d(x)
    \leq
    \|f\|_{\infty}|I_d(x)|
    =
    \|f\|_{\infty}\frac{b-a}{B^d}
    =
    \frac{M}{B^d}.
\]
Conditioned on \(X_1=x\), the remaining random variables
\(X_2,\ldots,X_n\) are independent of \(X_1\), and each falls in \(I_d(x)\)
with probability \(p_d(x)\). Therefore
\[
    N_d \mid (X_1=x)
    \stackrel{d}{=}
    1+\operatorname{Bin}(n-1,p_d(x)).
\]
Hence
\[
    \mathbb{E}[N_d-1\mid X_1=x]
    =
    (n-1)p_d(x)
    \leq
    \frac{(n-1)M}{B^d}
    \leq
    \frac{nM}{B^d}.
\]
Taking expectation over \(X_1\), we obtain
\[
    \mathbb{E}[N_d-1]
    \leq
    \frac{nM}{B^d}.
\]
Since \(N_d\) is integer-valued, the event \(N_d>T\) implies
\[
    N_d-1 \geq T.
\]
By Markov's inequality,
\[
    \Pr(N_d>T)
    =
    \Pr(N_d-1\geq T)
    \leq
    \frac{\mathbb{E}[N_d-1]}{T}
    \leq
    \frac{nM}{TB^d}.
\]
Since probabilities are also at most \(1\),
\[
    \Pr(D>d)
    =
    \Pr(N_d>T)
    \leq
    \min\left\{1,\frac{nM}{TB^d}\right\}.
\]
Let
\[
    A := \frac{nM}{T}.
\]
Then
\[
    \mathbb{E}[D]
    \leq
    \sum_{d\geq 0}\min\left\{1,\frac{A}{B^d}\right\}.
\]
Choose
\[
    d_0
    :=
    \max\left\{0,\left\lceil \log_B A \right\rceil\right\}.
\]
By definition of \(d_0\), we have \(A\leq B^{d_0}\). Therefore, for every
\(d\geq d_0\),
\[
    \frac{A}{B^d}
    \leq
    B^{d_0-d}
    \leq 1.
\]
Splitting the sum at \(d_0\), we get
\[
    \mathbb{E}[D]
    \leq
    \sum_{d=0}^{d_0-1} 1
    +
    \sum_{d=d_0}^{\infty} \frac{A}{B^d}.
\]
The first sum is \(d_0\), and the second satisfies
\[
    \sum_{d=d_0}^{\infty} \frac{A}{B^d}
    \leq
    \sum_{d=d_0}^{\infty} B^{d_0-d}
    =
    \sum_{k=0}^{\infty} B^{-k}
    =
    \frac{1}{1-B^{-1}}
    =
    \frac{B}{B-1}.
\]
Thus
\[
    \mathbb{E}[D]
    \leq
    d_0 + \frac{B}{B-1}.
\]
Substituting the definition of \(d_0\), we obtain
\[
    \mathbb{E}[D]
    \leq
    \max\left\{0,\left\lceil \log_B\left(\frac{nM}{T}\right)\right\rceil\right\}
    +
    \frac{B}{B-1}.
\]
Since \(M\), \(T\), and \(B\) are fixed constants independent of \(n\), this
implies
\[
    \mathbb{E}[D] = \mathcal{O}(\log n).
\]
\end{proof}

We now apply the theorem to several typical input distributions. These examples
are meant to illustrate the scale of the bound, rather than to precisely model
every possible input.

Assume that the input values lie in the interval
\[
    [a,b] = [0,10000],
\]
and suppose that each recursive distribution step uses
\[
    B = 2000
\]
buckets. We stop recursion once a bucket contains at most
\[
    T = 10
\]
elements, and we consider an input size of
\[
    n = 1{,}000{,}000.
\]
The theorem gives
\[
    \mathbb{E}[D]
    \leq
    \max\left\{
        0,
        \left\lceil
        \log_{2000}\left(\frac{nM}{T}\right)
        \right\rceil
    \right\}
    +
    \frac{2000}{1999},
\]
where
\[
    M = (b-a)\|f\|_{\infty}.
\]
Since \(n/T = 100{,}000\), this becomes
\[
    \mathbb{E}[D]
    \leq
    \left\lceil
        \log_{2000}(100{,}000M)
    \right\rceil
    +
    \frac{2000}{1999}.
\]

For a uniform distribution on \([0,10000]\), we have
\[
    f(x)=\frac{1}{10000},
\]
so
\[
    M = 10000 \cdot \frac{1}{10000} = 1.
\]
Therefore,
\[
    \mathbb{E}[D]
    \leq
    \left\lceil
        \log_{2000}(100{,}000)
    \right\rceil
    +
    \frac{2000}{1999}
    =
    2 + \frac{2000}{1999}
    \approx 3.00.
\]
Thus, for a uniform input of one million elements, the theorem predicts that
only a small number of recursive levels are needed on average.

Next consider a normal distribution centered at \(5000\), with most values
lying within roughly \(500\) units of the mean. For example, we may take
\[
    X \sim \mathcal{N}(5000,250^2),
\]
so that about \(95\%\) of the mass lies within \(500\) units of the mean. The
maximum density occurs at the mean, so
\[
    \|f\|_{\infty}
    =
    \frac{1}{250\sqrt{2\pi}}.
\]
Hence
\[
    M
    =
    10000 \cdot \frac{1}{250\sqrt{2\pi}}
    \approx
    15.96.
\]
The theorem gives
\[
    \mathbb{E}[D]
    \leq
    \left\lceil
        \log_{2000}(100{,}000\cdot 15.96)
    \right\rceil
    +
    \frac{2000}{1999}
    =
    2 + \frac{2000}{1999}
    \approx 3.00.
\]
Even though the distribution is much more concentrated than the uniform
distribution, the expected depth bound remains very small.

As another example, consider an exponential distribution on \([0,10000]\). If
we use a truncated exponential distribution with rate
\[
    \lambda = \frac{1}{1000},
\]
then its density is
\[
    f(x)
    =
    \frac{\lambda e^{-\lambda x}}
    {1-e^{-10000\lambda}},
    \qquad 0\leq x\leq 10000.
\]
The maximum occurs at \(x=0\), so
\[
    \|f\|_{\infty}
    =
    \frac{\lambda}{1-e^{-10000\lambda}}.
\]
Therefore,
\[
    M
    =
    10000\cdot
    \frac{1/1000}{1-e^{-10}}
    \approx
    10.00.
\]
The theorem gives
\[
    \mathbb{E}[D]
    \leq
    \left\lceil
        \log_{2000}(100{,}000\cdot 10.00)
    \right\rceil
    +
    \frac{2000}{1999}
    =
    2 + \frac{2000}{1999}
    \approx 3.00.
\]

We also consider a more concentrated mixture distribution. This example is
useful because it is less favorable to distribution sorting than the previous
smooth distributions: most of the input lies in a narrow interval, while a small
fraction of values are spread across the full range.

Suppose that each element is drawn from the following mixture:
\[
    X \sim
    \begin{cases}
        \operatorname{Unif}([4999,5001]), & \text{with probability } 0.99,\\
        \operatorname{Unif}([0,10000]), & \text{with probability } 0.01.
    \end{cases}
\]
The density is therefore
\[
    f(x)
    =
    0.99\cdot \frac{1}{2}\mathbf{1}_{[4999,5001]}(x)
    +
    0.01\cdot \frac{1}{10000}\mathbf{1}_{[0,10000]}(x).
\]
The maximum density occurs inside the narrow interval \([4999,5001]\), where
both mixture components contribute. Hence
\[
    \|f\|_{\infty}
    =
    \frac{0.99}{2}
    +
    \frac{0.01}{10000}
    =
    0.495001.
\]
Thus
\[
    M
    =
    10000\|f\|_{\infty}
    =
    4950.01.
\]
The theorem gives
\[
    \mathbb{E}[D]
    \leq
    \left\lceil
        \log_{2000}(100{,}000\cdot 4950.01)
    \right\rceil
    +
    \frac{2000}{1999}
    =
    3 + \frac{2000}{1999}
    \approx 4.00.
\]

This is larger than in the uniform, normal, and exponential examples,
but still quite small. The increase is expected: the high-density interval contains many more elements
per unit length, so the cardinalities of buckets intersecting that interval
shrink more slowly across recursive levels. Nevertheless, because each recursive step divides
the current value interval into \(2000\) buckets, even a strongly concentrated
mixture requires only a few recursive levels in expectation.

\begin{table*}[t]
\centering
\caption{Expected recursion-depth bounds for representative distributions.}
\label{tab:expected-depth-examples}
\begin{tabular}{lcc}
\toprule
Distribution & \(M=(b-a)\|f\|_{\infty}\) & Bound on \(\mathbb{E}[D]\) \\
\midrule
Uniform on \([0,10000]\) & \(1\) & \(\approx 3.00\) \\
Normal, \(\mathcal{N}(5000,250^2)\) & \(15.96\) & \(\approx 3.00\) \\
Truncated exponential, \(\lambda=1/1000\) & \(10.00\) & \(\approx 3.00\) \\
Mixture distribution & \(4950.01\) & \(\approx 4.00\) \\
\bottomrule
\end{tabular}
\end{table*}

The results are summarized in Table~\ref{tab:expected-depth-examples}.
These examples show that, for broad classes of common distributions, the
expected recursion depth can remain very small even for large inputs. The
reason is that each level divides the value range into a large number of
buckets. Unless the input density is extremely concentrated, the number of
elements in the bucket containing a typical element decreases rapidly with
depth. This helps explain why recursive distribution sorting can behave close
to linearly in practice on many numeric inputs.

\subsection{Adversarial Inputs}
\label{sec:adversarial-inputs}

The inputs most likely to be adversarial for NFS Sort are those that cause
equal-width bucket refinement to make little progress. Since buckets are chosen
according to value intervals rather than ranks, the algorithm performs best
when the input values are reasonably spread across the current range. It can be
challenged when many elements occupy a very small portion of that range.

However, these cases generally need to be quite extreme. As the examples in
Section~\ref{sec:expected-depth} show, even substantially non-uniform
distributions can still lead to very small expected recursion depths. For
instance, the clustered mixture distribution summarized in
Table~\ref{tab:expected-depth-examples} places most of its probability mass in a
narrow interval, yet the expected-depth bound remains small because each
recursive step divides the value range into many buckets.

A more adversarial pattern consists of a large dense cluster together with a
small number of extreme outliers. For example, an array in which almost all
values lie in a very narrow interval, but a few values are near the minimum and
maximum of the numeric range, can cause the initial bucket widths to be much
larger than the dense region. As a result, almost all elements may be assigned
to a single bucket, producing little reduction in subproblem size.

Another difficult case is a recursively clustered distribution. In such an
input, the largest bucket at one level again contains values concentrated in a
small subinterval of its analytical bounds. This can cause several consecutive
recursive calls to isolate only a few elements while leaving most of the array
inside one dominant bucket.

Inputs with many nearly equal floating-point values can have a similar effect,
especially when a few distant values enlarge the observed range. Although the
values are not identical, their numerical spread may be small compared with the
current interval, so they repeatedly fall into the same bucket.

These adversarial inputs are not necessarily ordered in a special way in memory.
Their difficulty comes primarily from the geometry of the value distribution:
large gaps, extreme outliers, very dense clusters, or nested clusters can make
equal-width value partitioning poorly balanced. In practice, this means that an
input must be highly concentrated relative to its surrounding value range, often
over multiple recursive levels, before it becomes genuinely adversarial. Mild or
even high skew is not enough by itself; the problematic cases are extreme
distributions in which almost all elements remain concentrated inside a tiny
part of the current value interval over multiple recursive levels.

\section{Experiments}\label{section:experiments}
In this section, we test NFSS against several popular state-of-the-art sorting algorithms on both synthetic and real-world datasets.
For contending algorithms, we selected both a learned sorting algorithm, \emph{Balanced Learned Sort} (BLS)~\cite{ferragina2024balancedlearnedsortnew}, and plain, non-learned sorting algorithms: \emph{Boost Spreadsort}, \emph{In-place Parallel Super Scalar Samplesort} (IPS$^4$o)~\cite{axtmann2017ips4o}, \emph{Pattern-Defeating Quicksort} (PDQSort)~\cite{peters2021patterndefeatingquicksort}, and the standard C++ \texttt{std::sort}~\cite{musser1997introspective}.

Synthetic datasets were generated with the following distributions: \textbf{uniform}, \textbf{already sorted}, \textbf{reverse sorted}, \textbf{organ pipe}, \textbf{normal}, \textbf{nearly sorted}, \textbf{few unique values}, \textbf{exponential}, and \textbf{clustered} distributions.
For real-world datasets, we selected \emph{Search on Sorted Data} (SOSD) datasets~\cite{Marcus_2020,kipf2019sosdbenchmarklearnedindexes}. We used SOSD \textbf{books}, \textbf{Facebook}, \textbf{OpenStreetMap cell-id}, and \textbf{Wikipedia timestamp} datasets. We also include SOSD's distributional datasets, namely \textbf{lognormal}, \textbf{normal}, \textbf{dense-uniform}, and \textbf{sparse-uniform} keys, to cover controlled non-real-world key distributions.

Tests are run for various dataset sizes, ranging from $10^4$ to $10^9$ for all distributions. For this purpose, we sample the real-world datasets by shuffling them and taking the first $n$ items, where $n$ is the selected dataset size. This allows us to test the real-world datasets at arbitrary dataset sizes while preserving the underlying distribution.

All experiments were run on a single core of an AMD EPYC 9354 processor. The benchmark was compiled as an optimized C++20 binary using release settings \texttt{-O3 -march=native -DNDEBUG}. Each reported runtime is based on 10 runs. Each run uses a separate seed, which controls synthetic dataset generation and real-world dataset shuffling. The seed is shared across all algorithms, meaning each algorithm receives the same dataset as the others in each run. Timing is measured only around the sorting call using \texttt{std::chrono::steady\_clock}, and sorting correctness is validated.

\begin{figure*}[ht]
    \centering
    \includegraphics[width=\textwidth]{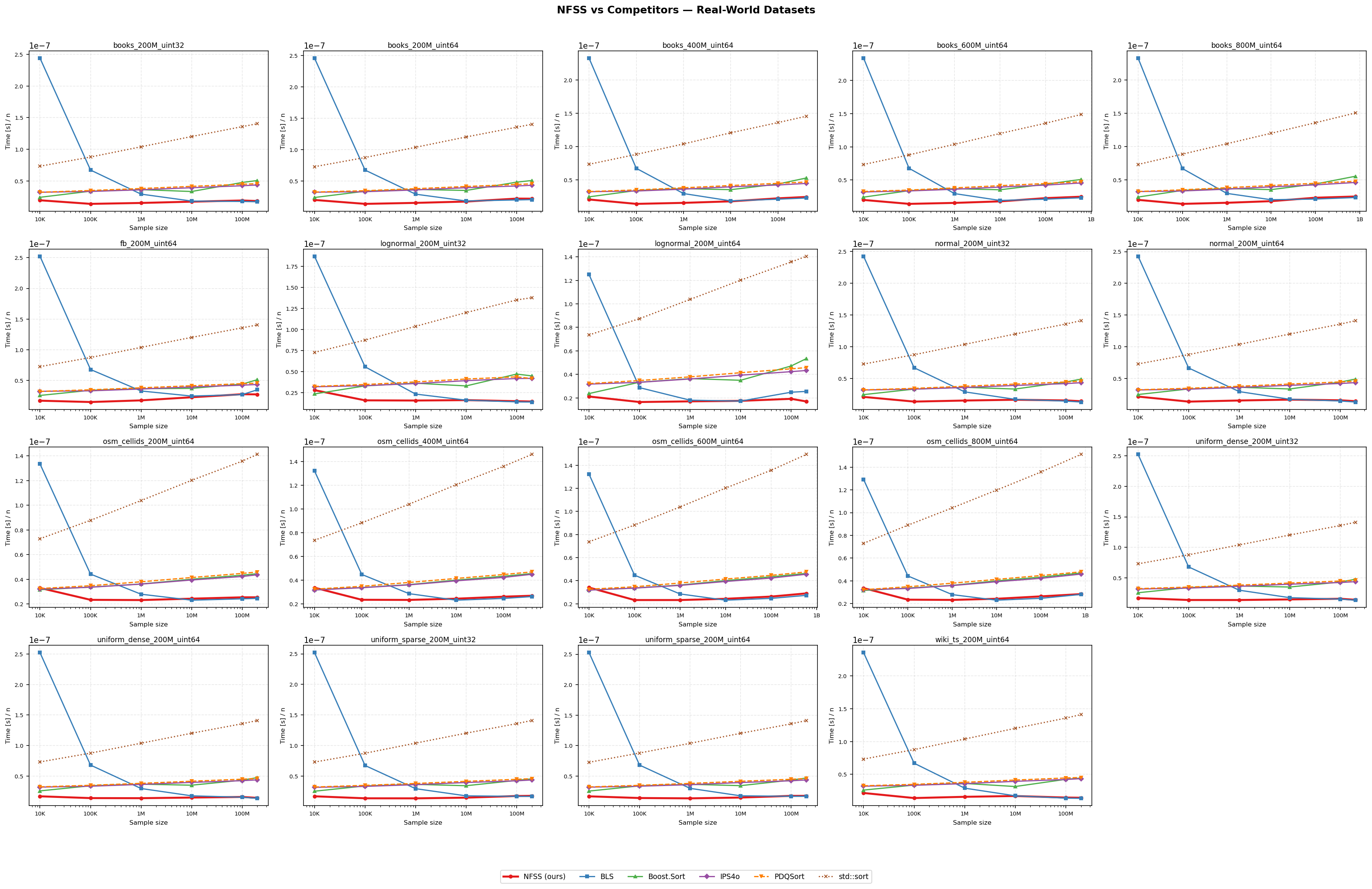}
    \caption{Time results of NFSS against other sorting algorithms on real-world datasets with varying dataset sizes.}
    \label{fig:real-world-time}
\end{figure*}

\begin{figure*}[h]
    \centering
    \includegraphics[width=0.5\linewidth]{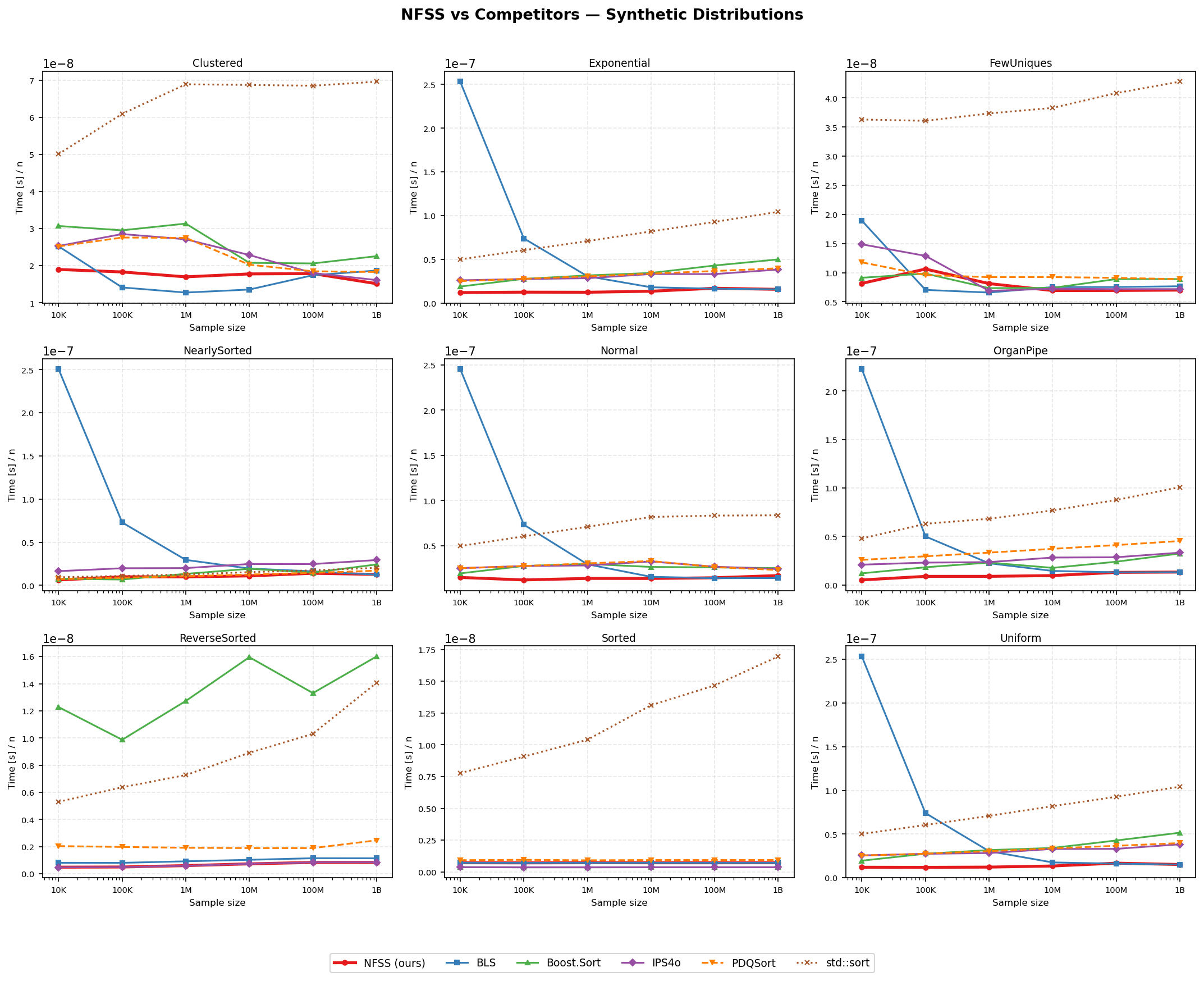}
    \caption{Time results of NFSS against other sorting algorithms on synthetic datasets with varying dataset sizes.}
    \label{fig:synth-time}
\end{figure*}

Notably, as shown in Figures \ref{fig:real-world-time} and \ref{fig:synth-time}, NFSS performs better than BLS on small dataset sizes, while sustaining comparable performance on larger sets and outperforming other sorting algorythms. Figures \ref{fig:real-speedup} and \ref{fig:synth-speedup} further visualise the average speed-up against other algorithms for every dataset. We base the speed-up ratios on a mean of all sizes.

\begin{figure*}[]
    \centering
    \includegraphics[width=0.7\linewidth]{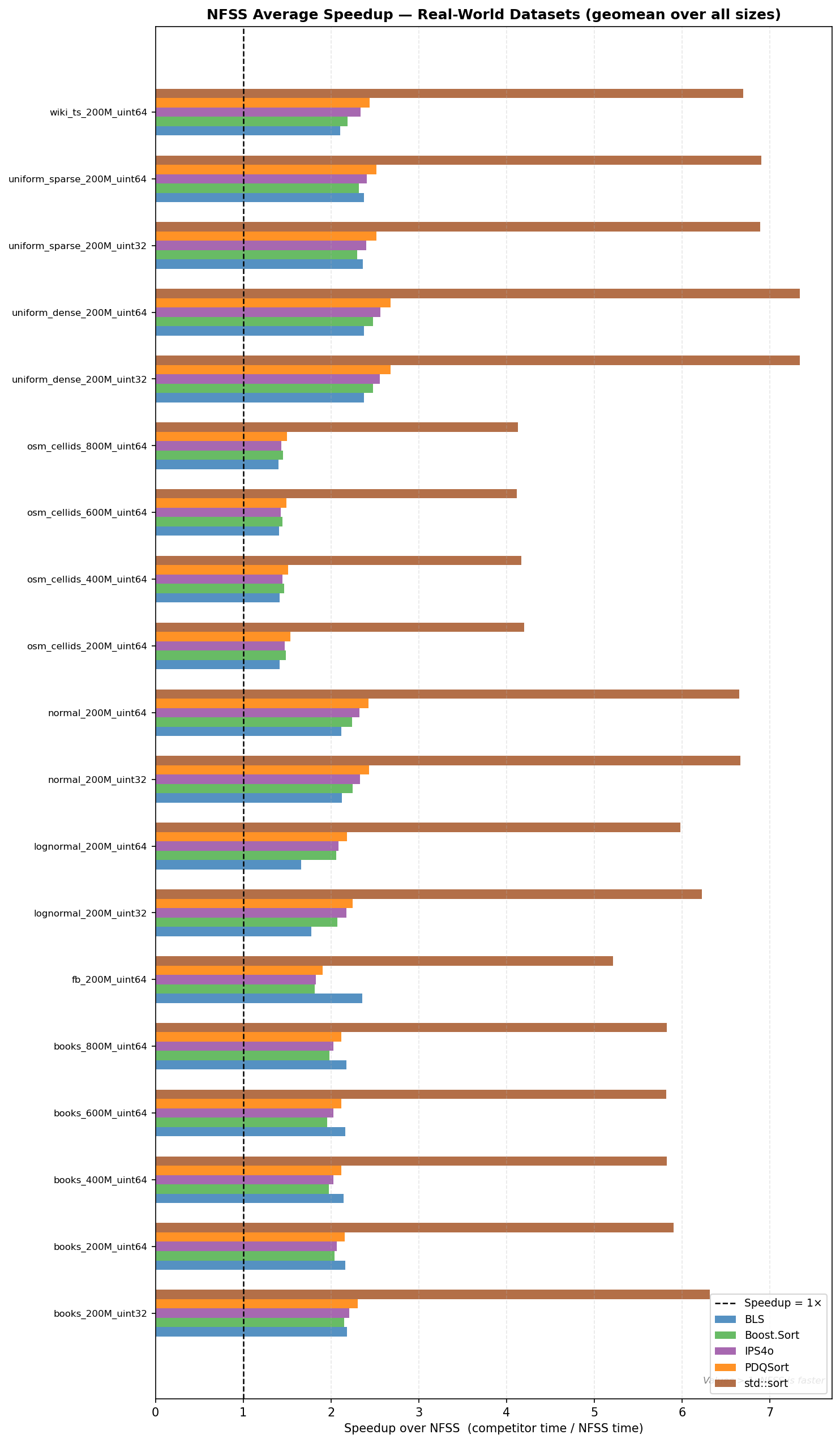}
    \caption{Average speedup over competitors on real-world datasets, expressed as competitor time/NFSS time, values larger than 1.0 mean NFSS is faster.}
    \label{fig:real-speedup}
\end{figure*}

\begin{figure*}[]
    \centering
    \includegraphics[width=0.5\linewidth]{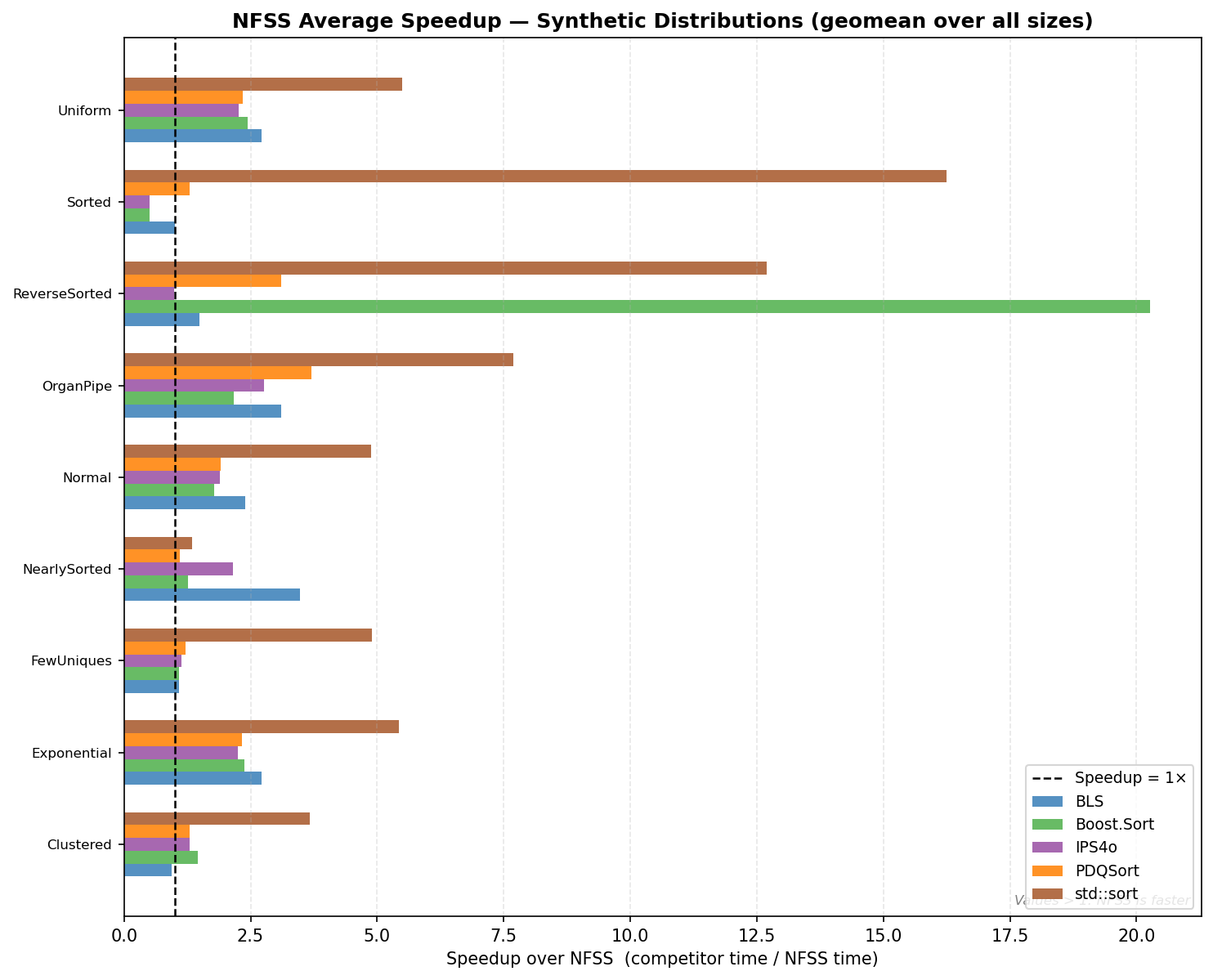}
    \caption{Average speedup over competitors on synthetic datasets, expressed as competitor time/NFSS time, values larger than 1.0 mean NFSS is faster.}
    \label{fig:synth-speedup}
\end{figure*}
\section{Conclusion}

In this paper, we introduced \emph{Need for Speed Sort} (NFS Sort), a recursive distribution-based sorting algorithm designed to efficiently handle numeric arrays. By systematically refining dense value intervals and propagating analytical bounds, NFS Sort exploits the structure of the input's value range while circumventing the overhead of repeated linear scans. Furthermore, the integration of a strategic fallback mechanism guarantees a worst-case time complexity of $\mathcal{O}(n \log n)$ and an auxiliary space footprint of $\mathcal{O}(\log n)$, ensuring robust performance even against highly adversarial inputs.

Our theoretical analysis demonstrated that, for a wide variety of input distributions, the expected recursion depth remains shallow, allowing the algorithm to operate in near-linear time in practice. This theoretical efficiency is strongly supported by our extensive empirical evaluation across diverse synthetic and real-world (SOSD) datasets. As shown in Section~\ref{section:experiments}, NFS Sort is highly competitive with established state-of-the-art sorting algorithms. Most notably, it consistently outperforms recent learned sorting techniques, such as Balanced Learned Sort (BLS), particularly on smaller datasets, while maintaining comparable and highly efficient performance on larger scales. Against other non-learned state-of-the-art sorts, NFS Sort is competitive or better across all scales.

Ultimately, NFS Sort successfully combines the theoretical advantages of distribution sorting with modern memory-management techniques to meet the practical demands of contemporary software systems. Future work could explore extending the algorithm's distribution logic to non-numeric keys, parallelizing the recursive scatter steps to leverage multi-core architectures, and further fine-tuning the fallback heuristics for specialized hardware environments.
\appendix

\section{Naive NFS Sort Pseudocode}
\label{app:naive-nfs}

\begin{figure*}[t]
\begin{lstlisting}[caption={Naive NFS Sort}, label={alg:naive-nfs}]
void nfss_sort(double[] array) {

    // find minimum and maximum values in the array
    double min, max;
    (min, max) = array.getMinAndMax();

    // initialize helper array for scatter step
    double[] helper_array = new double[C];

    nfss_core(array, helper_array, array.start, array.end, min, max, 0);

    insertion_sort(array);

}

void nfss_core(
    double[] array,
    double[] helper_array,
    int array_start,
    int array_end,
    double min,
    double max,
    int pathological_recursion_counter
) {
    // calculate array basic information
    int array_size = array_end - array_start;
    double interval = max - min;
    double inverse_interval = 1.0 / interval;

    int bucket_number;
    int[] bucket_offsets;

    // Decide on strategy and bucket number based on array size
    if (array_size > 5000000) {
        bucket_number = 2000;
        bucket_offsets = IPSo_inspired_fragment_scatter(array, helper_array, array_start, array_end, min, max, bucket_number, inverse_interval);
    } else if (array_size > 1000000) {
        bucket_number = 1000;
        bucket_offsets = IPSo_inspired_fragment_scatter(array, helper_array, array_start, array_end, min, max, bucket_number, inverse_interval);
    } else if (array_size > 500000) {
        bucket_number = 200;
        bucket_offsets = IPSo_inspired_fragment_scatter(array, helper_array, array_start, array_end, min, max, bucket_number, inverse_interval);
    } else if (array_size > 10000) {
        bucket_number = 100;
        bucket_offsets = IPSo_inspired_fragment_scatter(array, helper_array, array_start, array_end, min, max, bucket_number, inverse_interval);
    } else {
        bucket_number = minimum(6500, array_size);
        bucket_offsets = naive_scatter(array, helper_array, array_start, array_end, min, max, bucket_number, inverse_interval);
    }

    // calculate interval size and "bad" critical mass
    double interval_size = interval / (bucket_number - 1);
    double critical_mass = $\alpha$ * array_size;

    // decide what to do with each bucket based on bucket size
    for (int i = 0; i < bucket_number; i++) {
        int bucket_start = array_start + bucket_offsets[i];
        int bucket_end = array_start + bucket_offsets[i + 1];
        int bucket_size = bucket_end - bucket_start;

        // insertion sort at the end will handle these buckets
        if (bucket_size <= small_threshold) continue;

        // pathological buckets that have hardly lost any mass
        if (bucket_size > critical_mass) {
        
            // we try a fixed number of times with recursion,
            // and if it doesn't work we fall back on some other efficient sort
            if (pathological_recursion_counter < 1) {
                nfss_core(array, helper_array, bucket_start, bucket_end,
                    min + interval_size * i,
                    min + interval_size * (i + 1),
                    pathological_recursion_counter + 1);
            } else {
                fallback_sort(array, bucket_start, bucket_end);
            }
        } else {
            nfss_core(array, helper_array, bucket_start, bucket_end,
                min + interval_size * i,
                min + interval_size * (i + 1),
                0);
        }
    }

}
\end{lstlisting}
\end{figure*}

\bibliographystyle{ACM-Reference-Format}
\bibliography{references}

\end{document}